\documentclass[a4paper,10pt]{article}
\usepackage{amsmath,amssymb,amsthm}
\usepackage{algorithmic}
\usepackage{algorithm}
\usepackage{graphicx}
\usepackage{arydshln}
\usepackage{booktabs}
\textwidth 6.5in\oddsidemargin 0in
\textheight 9in\topmargin -0.5in

\newtheorem{corollary}{Corollary}

\newtheorem{lemma}{Lemma}

\newcommand{\poly}{\rm poly}

\newcommand{\Ber}{\rm Ber}
\newcommand{\F}{\mathbb{F}}
\newcommand{\N}{\mathbb{N}}

\title{Solving the LPN problem in cube-root time
  \footnote{Partially supported by ArmaSuisse funding ARAMIS R3210/047-12 and SNF grant No. 121874.} \\
}

\author{Urs Wagner\\
{\small {\em e-mail:\/} urs.wagner@math.uzh.ch \vspace{-1mm} }\\
{\small Mathematics Institute\vspace{-1mm}}\\
{\small University of Z\"urich\vspace{-1mm}}\\
{\small Winterthurerstr 190, CH-8057 Z\"urich, Switzerland }
}
\begin{document}
\maketitle

\begin{abstract}
In this paper it is shown that given a sufficient number of (noisy) random binary linear equations, 
the Learning from Parity with Noise (LPN) problem can be solved in essentially cube root time in the number of 
unknowns. 
The techniques used to recover the solution are known from fast correlation attacks on stream ciphers. 
As in fast correlation attacks, the performance of the algorithm depends on the number of equations given. It is
 shown that if this number exceeds a certain bound, and the bias of the noisy equations
is polynomial in number of unknowns $n$, the running time of the algorithm is reduced to 
$2^{\frac{n}{3}+o(n)}$ compared to the brute force checking of all $2^n$ possible solutions.
The mentioned bound is explicitly given and it is further shown that when this bound
 is exceeded, the complexity of the approach can even be further reduced.
\end{abstract}

\vspace{3mm}
\noindent{\bf Key Words:} LPN, cryptanalysis, fast correlation attack, stream cipher
\\ \noindent{\bf Subject Classification:} 94A60
\vspace{3mm}

\section{Introduction}
In many cryptanalyses, especially in fast correlation attacks on stream ciphers, 
some information on the secret key is leaked in form of a set of linear binary equations which are satisfied with probability 
bigger than one half. For each of these equations, let $q=\frac{1}{2}+\epsilon$ be the probability 
that the secret key is in the solution set. We call $\epsilon$
the bias. It is clear that if $\epsilon=\frac{1}{2}$, every equation essentially halfes the number of possible solutions,
as long as it is independent from the previous ones. 
In particular if the system of equation has full rank, the key can be recovered in polynomial time by simple Gaussian Elimination.
An interesting problem lies in how to recover the key if $\epsilon<\frac{1}{2}$. The LPN problem (e.g. \cite{ju05},\cite{le06}) 
captures the essence of this task.
Let
\begin{itemize}
 \item $x \in \F^n_2$ be a $n$-dimensional binary vector, also referred to as the key in the sequel.
 \item $E \sim \Ber(p)$ be a random variable with $\Pr(E=1)=p=\frac{1}{2}-\epsilon$ and 
$\Pr(E=0)=q=\frac{1}{2}+\epsilon$, $\epsilon \in \left[0,\frac{1}{2}\right]$.
 \item \O{}$_\epsilon$ be an oracle that uniformly at random chooses $g \in \F_2^n$ and outputs pairs 
$(\langle g,x \rangle + e,g)$ 
where $e$ is drawn according to $E$ and 
$\langle \cdot , \cdot \rangle$ denotes the usual inner product. The $g$'s can be seen as binary linear equations 
and computing the scalar product with $x$ corresponds to evaluating them at $x$.
\end{itemize}
The $n$-dimensional LPN$_\epsilon$ problem can be stated as follows: Given \O$_\epsilon${} and $\epsilon$, recover $x$.
A lower bound on the number $N$ of oracle calls necessary in order to be able to identify the correct $x$
with non-negligible probability can be given.
This bound corresponds to the number of samples $N$ necessary in order to make a good guess whether a random variable
$X$ is distributed according to Be$(p)$ or $X \sim$ Be$\left(\frac{1}{2}\right)$, where Be denotes the usual bernoulli distribution.
It is common knowledge that this number satisfies $N=O\left(\frac{1}{\epsilon^2}\right)$,
 as can easily be seen by Hoeffding's inequality \cite{bl03}.
As a consequence $x$ can be recovered in time $O(2^n \log \frac{1}{\epsilon^2})$ making 
$N=O\left(\frac{1}{\epsilon^2}\right)$ oracle calls. This is achieved by evaluating all equations 
 at $2^n$ points, using the techniques of fast Walsh transform \cite{ch02}.
While the LPN problem is proven to be NP-hard \cite{be78}, in the case where $N \gg \frac{1}{\epsilon^2}$
faster approaches than brute-force checking of all 
potential keys are possible. Especially the techniques known from fast correlation attacks 
(see e.g. \cite{ch02}, \cite{fo07}, \cite{lu04}) are well applicable. The core of most techniques lies in finding
linear combinations of the given equations such that a hypothesis on a subset of keybits can be tested. 
The application of these techniques to the LPN problem has been studied already in e.g. \cite{bl03, fo06, le06}.
While the attack we consider is not different to e.g. the one in \cite{fo06}, 
 the approach to the problem is another.
From past work, e.g. \cite{bl03,ch02,fo06,fo07,le06}, it is not immediately clear how the complexity
behaves depending on the number $N$ of random linear equations and the bias $\epsilon$. The influence of
$N$ and $\epsilon$ becomes explicit in our considerations. We will show that
if $\epsilon = \frac{1}{\poly(n)}$, then the LPN problem can be solved in time $2^{\frac{n}{3}+o(n)}$ given
 $N = 2^{\frac{n}{\log n}+o(n)}$ equations. 

The paper is organized as follows.
In Section \ref{sec:alg} a short overview on the fast correlation attack techniques is given.
In Section \ref{sec:main} it is shown how the complexity to recover the secret key depends on the number $N$ of oracle calls and 
the main result is stated at the end of the section. Section \ref{sec:evenmore} contains the case 
where the number $N$ of given equations
 exceeds the bound sufficient for a cube root attack.
In Section \ref{sec:conclusion} an illustrating example is given. 
Throughout the paper, $\log$ will denote the logarithm to base $2$.
\label{sec:introduction}

\section{Linear Combination and Hypothesis Testing}
\label{sec:alg}

Most fast correlation attacks rely on the principles of linear combination and hypothesis testing.
The goal of linear combination lies in constructing binary linear equations that
 depend only on a subset of the keybits. These
equations can then be used to test a hypothesis on this subset of keybits.
Let $g'_i=(\langle g_i,x \rangle + e_i, g_i) \in \F^{n+1}_2$ be a sample output by the oracle \O{}$_\epsilon$.
Note that if we add $w$ random samples $g'_{i_1},\dots,g'_{i_w}$ from \O{}$_{\epsilon}$, i.e.
if we consider 
$\tilde{g}=\left(\sum_j \langle g_{i_j},x \rangle + \sum_j e_{i_j}, \sum_j g_{i_j} \right)=\left(\langle \sum_j  g_{i_j},x \rangle + \sum_j e_{i_j}, \sum_j g_{i_j} \right) $
 this looks like a
sample output from \O{}$_{\tilde{\epsilon}}$, with $\tilde{\epsilon}=2^{w-1}\epsilon^{w}$. This can be seen 
by the well known Piling-up lemma (e.g. \cite{va06}).
By appropriately choosing $w$-tuples of samples from \O{}$_{\epsilon}$, we can get equations from 
\O{}$_{\tilde{\epsilon}}$ which depend only on a subset of keybits. 
\begin{lemma}
\label{lem:1}
Let $w \in \N$ be even and $N \gg w$ be the number of  samples given from \O{}$_\epsilon$.
Then all $w$-ary linear combinations of these equations which are all zero in the last 
$b$ bits can be found in time and space
\begin{equation}
O\left(\max \left\{N^{\frac{w}{2}},\frac{N^w}{2^{b}} \right\}\right).
\label{equ:compl_samples}
\end{equation}
\end{lemma}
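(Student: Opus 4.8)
The plan is to run a meet-in-the-middle (generalized-birthday) collision search on the $b$ distinguished coordinates. Regard each sample $g'_i$ as a vector of $\F_2^{n+1}$ and fix the index set $B$, $|B| = b$, of the coordinates on which the combination must vanish. A $w$-ary combination that is zero on $B$ is simply a sum $g'_{i_1} + \dots + g'_{i_w}$ of $w$ distinct samples whose restriction to $B$ is $0$. Splitting the $w$ indices into two blocks $I, J$ of size $w/2$, writing $s_I := \sum_{i \in I} g'_i$ and letting $\cdot|_B$ denote restriction to $B$, this is equivalent to the collision condition $s_I|_B = s_J|_B$.

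First I would, in a preprocessing phase, enumerate all $O(N^{w/2})$ half-sums $s_I$ over size-$(w/2)$ subsets $I$ of the $N$ samples, and store each one in a hash table indexed by the $b$-bit key $s_I|_B$; this costs time and space $O(N^{w/2})$. In the search phase I enumerate the remaining $O(N^{w/2})$ half-sums $s_J$, compute $s_J|_B$, probe the table, and for every stored $s_I$ with $s_I|_B = s_J|_B$ report the $w$-tuple $(I, J)$ and the corresponding combination $s_I + s_J$. By construction every $w$-ary combination vanishing on $B$ is found once its support is split into two blocks, so the procedure is complete, and correctness is immediate.

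For the complexity, the preprocessing costs $O(N^{w/2})$. In the search phase each of the $O(N^{w/2})$ probes takes $O(1)$ time plus time proportional to the number of collisions it reports, and the total number of reported collisions is exactly the number of output combinations. Since the $g_i$ are chosen uniformly at random, for a subset $I$ with distinct indices the value $s_I|_B$ is uniform on $\F_2^b$; hence by linearity of expectation the expected number of pairs $(I, J)$ with $s_I|_B = s_J|_B$ is $\Theta(N^w / 2^b)$. Therefore the expected running time is $O(N^{w/2} + N^w/2^b) = O(\max\{N^{w/2},\, N^w/2^b\})$, and the space is the hash table plus the list of outputs, which obeys the same bound.

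The step that needs the most care is the count $N^w / 2^b$ for the number of solutions: it is an average over the randomness of the $g_i$, so to make it rigorous one invokes only the uniform marginals of the half-sums $s_I|_B$ (valid whenever $I$ has distinct support), applies linearity of expectation, and uses the hypothesis $N \gg w$ to guarantee that there are enough distinct-support subsets that the $O(1/N)$ corrections coming from repeated indices — as well as the bounded overcounting of each $w$-set through the choice of its two blocks — are absorbed into the hidden constants. Everything else is routine hashing bookkeeping.
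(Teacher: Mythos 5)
Your proposal is correct and is essentially the same algorithm as the paper's proof: both enumerate the $\Theta(N^{w/2})$ half-sums of size $w/2$, group them by their value on the $b$ distinguished coordinates (the paper via $2^b$ explicit blocks, you via a hash table keyed on those $b$ bits), and pair up matching half-sums, with the expected output size $\Theta(N^w/2^b)$ computed identically from the uniformity of the restricted half-sums. Your added remarks on distinct indices and the $\binom{w}{w/2}$-fold overcounting of each $w$-set are harmless constant-factor refinements that the paper leaves implicit.
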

\begin{proof}
The number of all $\frac{w}{2}$-ary linear combinations of the given equations equals
\begin{equation}
 {N \choose w/2}= \Theta\left(N^{\frac{w}{2}}\right),
\label{equ:w/2ary}
\end{equation}
for fixed $w$.
Compute these linear combinations and store the resulting equations in blocks according to the last $b$ bits,
i.e. inside a block the new equations coincide on the last $b$ bits.
In each of the $2^b$ blocks there are an expected number of 
\[
 \frac{{N \choose w/2}}{2^b}
\]
equations. Inside each block, take all $2$-ary combinations 
which every time gives an expected number of
\[
  \left(\frac{{N \choose w/2}}{2^b}\right)^2
\]
equations of the desired form. 
As there are $2^b$ blocks, we get an expected number
\begin{equation}
2^b\left(\frac{{N \choose w/2}}{2^b}\right)^2=\Theta\left(\frac{N^{w/2}}{2^b}\right),
\label{equ:2ary}
\end{equation}
equations of the desired form.
The complexity of the whole is the sum of the complexities for getting all the $\frac{w}{2}$-ary linear combinations 
(\ref{equ:w/2ary}) and
and all the $2$-ary combinations (\ref{equ:2ary}) inside the $2^b$ blocks.
\end{proof}
Clearly the $\frac{N^{w}}{2^{b}}$ equations found as in Lemma \ref{lem:1} depend on the first $n-b$ keybits only.
A hypothesis on these bits can be tested if 
\begin{equation}
\frac{N^{w}}{2^{b}} \geq \frac{1}{\epsilon'^2}= \frac{1}{2^{2(w-1)}\epsilon^{2w}},
\end{equation}
 as discussed in Section \ref{sec:introduction}. Note that we implicitly assume that the new equations are pairwise
independent, what seems to be a admissible assumption \cite{lu04}.
In order to find the correct $n-b$ keybits, all possible hypotheses on these bits are checked. This can be done by 
techniques of the fast Walsh transform \cite{ch02}.
\begin{lemma}
\label{lem:2}
Evaluating $\frac{1}{2^{2(w-1)}\epsilon^{2w}}$ binary linear equations in $n-b$ variables can be done in time
\begin{equation}
 2^{n-b} \log \frac{1}{2^{2(w-1)}\epsilon^{2w}}.
\label{equ:compl_evaluating}
\end{equation}
\end{lemma}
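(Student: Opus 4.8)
The plan is to avoid the naive cost of testing each of the $M := \frac{1}{2^{2(w-1)}\epsilon^{2w}}$ surviving equations separately against each of the $2^{n-b}$ candidate assignments (which would cost $M\cdot 2^{n-b}$), and instead to compute all $2^{n-b}$ scores in a single pass by recognising the underlying structure as a Walsh--Hadamard transform. Each equation that survived Lemma \ref{lem:1} depends only on the first $n-b$ keybits, so it has the form $\langle h_i,y\rangle = c_i$ with $h_i\in\F_2^{n-b}$ and $c_i\in\F_2$, where $y$ ranges over the $2^{n-b}$ hypotheses. For a fixed $y$ the relevant quantity is the signed tally
\[
\Sigma(y)=\sum_{i=1}^{M}(-1)^{\langle h_i,y\rangle + c_i},
\]
that is, the number of equations satisfied by $y$ minus the number violated; the correct hypothesis is the $y$ maximising $\Sigma$, and reading off this maximiser is what ``evaluating the equations'' amounts to.

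First I would fold the list of equations into a single integer-valued function on $\F_2^{n-b}$. Define $\phi\colon\F_2^{n-b}\to\mathbb{Z}$ by
\[
\phi(h)=\sum_{i\,:\,h_i=h}(-1)^{c_i},
\]
which is assembled in $O(M)$ steps: allocate an array indexed by the $2^{n-b}$ possible coefficient vectors, initialise it to zero, and run once through the equation list adding $(-1)^{c_i}$ into cell $h_i$. Factoring $(-1)^{\langle h_i,y\rangle+c_i}=(-1)^{c_i}(-1)^{\langle h_i,y\rangle}$ and grouping equal coefficient vectors then gives
\[
\Sigma(y)=\sum_{h\in\F_2^{n-b}}\phi(h)\,(-1)^{\langle h,y\rangle},
\]
so that the full vector $(\Sigma(y))_{y}$ is exactly the Walsh--Hadamard transform of $\phi$.

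Next I would compute this transform for all $2^{n-b}$ arguments simultaneously using the fast Walsh transform of \cite{ch02}, whose butterfly structure produces the entire output with on the order of $2^{n-b}$ additions and subtractions per stage rather than the $2^{2(n-b)}$ cost of the explicit matrix--vector product. The remaining point is the bit-level accounting that produces the logarithmic factor: every entry manipulated during the transform is a partial signed sum of the $M$ contributions $(-1)^{c_i}$, hence bounded in absolute value by $M$, so each such entry is stored in $O(\log M)$ bits and every butterfly operation costs $O(\log M)$ bit operations. Combining the $O(M)$ cost of building $\phi$ with the transform cost yields the claimed bound $2^{n-b}\log M = 2^{n-b}\log\frac{1}{2^{2(w-1)}\epsilon^{2w}}$; here one uses that in the regime of interest $M$ is subexponential while $2^{n-b}$ is exponential, so the transform, and not the table-building, is the dominant term.

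The main obstacle I anticipate is this final accounting rather than any single algebraic identity: one must justify that the vector of scores is genuinely a Walsh transform (which it is, once the equations are folded into $\phi$), and then argue that $\log M$ is the right per-entry word length and that the linear-in-$(n-b)$ overhead from the number of butterfly stages is harmless. In the asymptotic framework of the paper this last overhead is absorbed into the $2^{o(n)}$ of the overall complexity, which is why the statement records only the essential factors $2^{n-b}$ and $\log M$.
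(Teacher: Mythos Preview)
Your approach is exactly what the paper has in mind: the paper does not actually prove Lemma~\ref{lem:2} at all, it merely states it after remarking that ``this can be done by techniques of the fast Walsh transform~\cite{ch02}''. Your folding of the equations into a table $\phi$ on $\F_2^{n-b}$ and recognition of the score vector as its Walsh--Hadamard transform is precisely the standard argument that citation points to, so there is no discrepancy in method. Your closing caveat about the extra $(n-b)$ factor coming from the number of butterfly stages is apt---the stated bound $2^{n-b}\log M$ suppresses it---and you handle it correctly by observing it disappears into the $2^{o(n)}$ slack used throughout the paper.
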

In the next section we derive the optimal choice of the parameters $w$ and $b$, and we will see how the 
resulting complexity behaves depending on $N$.

\section{Cube-root algorithm}
\label{sec:main}
Suppose we are given $N\geq \frac{1}{\epsilon^2}$ samples from \O{}$_\epsilon$.
In Section \ref{sec:alg} we have seen that
\begin{itemize}
 \item if for $w,b \in \N$ it holds that $w$ is even and  
\begin{equation}
\label{equ:numequ}
\frac{N^{w}}{2^{b}} \geq \frac{1}{2^{2(w-1)}\epsilon^{2w}},
\end{equation}
 then we can recover the first
$n-b$ keybits in time
\begin{equation}
 O\left(\max \left\{N^{\frac{w}{2}},\frac{N^w}{2^{b}}, 2^{n-b} \log \frac{1}{2^{2(w-1)}\epsilon^{2w}}\right\}\right).
\label{equ:complexity}
\end{equation}
\end{itemize}
In this section we will show how to find optimal parameters $b$ and $w$ such that the expression
 in (\ref{equ:complexity}) is minimal 
under the condition that the inequality (\ref{equ:numequ}) is satisfied.
Clearly (\ref{equ:numequ}) is equivalent to 
\[
w \left(\log N +2+2 \log \epsilon\right) \geq b+2,
\]
by taking the logarithm on both sides.
We will now show that in order to reach minimal complexity in (\ref{equ:complexity}) 
this inequality must be satisfied with equality.
Note that the right hand side of the inequality is increasing with $b$ and as $N \geq \frac{1}{\epsilon^2}$, 
the left hand side is increasing with $w$. 
Suppose that for a given choice of $b$ and $w$ the inequality is strict. 
Then either $b$ can be increased or $w$ can be decreased resulting in a decrease of the overall complexity 
(\ref{equ:complexity}), while the inequality still holds.
So we can require equality 
\begin{equation}
w \left(\log N +2+2 \log \epsilon\right)-2=b.
\label{equ:b}
\end{equation}
Using this in equation (\ref{equ:complexity}),  we get the following overall complexity
\[
O\left(\max \left\{N^{\frac{w}{2}},\frac{1}{2^{2(w-1)}\epsilon^{2w}}, 2^{n}\frac{1}{N^w} \frac{1}{2^{2(w-1)}\epsilon^{2w}}\log \frac{1}{2^{2(w-1)}\epsilon^{2w}}\right\}\right).
\]
In order to ease discussion we adjust the condition on $N$.
From now on we will assume that
\[
 N \geq \frac{4}{(2\epsilon)^4}.
\]
As a direct consequence
\[
 N^{\frac{w}{2}} \geq \frac{1}{2^{2(w-1)}\epsilon^{2w}},
\]
and the overall complexity equals
\[
O\left(\max \left\{\underbrace{N^{\frac{w}{2}}}_{\alpha(w)}, \underbrace{2^{n}\frac{1}{N^w} \frac{1}{2^{2(w-1)}\epsilon^{2w}}\log \frac{1}{2^{2(w-1)}\epsilon^{2w}}}_{\beta(w)}\right\}\right).
\]
One readily verifies that $\alpha(w)$ is growing with $w$ and $\beta(w)$ is decreasing with $w$.
Hence the whole term reaches its minimum at the intersection of the two functions, i.e. if $\alpha(w)=\beta(w)$.
In order to get an (approximate) solution for the equation $\alpha(w)=\beta(w)$, we ignore the logarithmic term in $\beta(w)$
and obtain:
\begin{equation}
\label{equ:w}
 w=\frac{n+2}{3/2 \log N +2+2\log \epsilon}.
\end{equation}
Using (\ref{equ:b}), for $b$ we obtain:
\begin{equation}
 b=w \left(\log N +2+2 \log \epsilon\right)-2=n-\frac{(n+2)\log N}{3\log N+4+4\log \epsilon}=n-\frac{w}{2}\log N.
\label{equ:bb}
\end{equation}
We will now examine how this choice of the parameters affects the complexity of the linear combination and hypothesis testing 
approach.
For simplicity in the further analysis let us define
\begin{equation}
T_\epsilon(N):= \frac{\log N}{3\log N+4+4\log \epsilon}.
\label{equ:T_e}
\end{equation}
So we can write
\begin{equation}
 w=\frac{2(n+2)}{\log N}T_\epsilon(N),
\label{equ:w_T}
\end{equation}
and
\begin{equation}
 b=w\left(\log N +2+2 \log \epsilon \right)-2=n-(n+2)T_\epsilon(N).
\label{equ:b_T}
\end{equation}
\begin{lemma}
\label{lem:OO}
Notation as in the considerations before. Making  $N \geq \frac{4}{(2\epsilon)^4}$ oracle calls
and writing 
 $r:= w-2 \left\lfloor \frac{w+1}{2} \right\rfloor$ and 
 $T:= T_\epsilon(N)$,
the $n$-dimensional LPN$_\epsilon$ problem can be solved in time and space
\begin{equation}
O\left(2^{(n+2)T+ |r| \log N + \log \left((n+2)T+ \log N\right) }\right).
\label{equ:OO}
\end{equation}
\end{lemma}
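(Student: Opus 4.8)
The plan is to begin from the real-valued optimum found above and account for the constraint that the number $w$ of combined equations must be an \emph{even integer}. Recall that the unconstrained optimum $w$ of (\ref{equ:w_T}) and the matching $b$ of (\ref{equ:b_T}) balance the two dominant terms $\alpha(w)=N^{w/2}$ and $\beta(w)$, and that under the standing hypothesis $N\geq\frac{4}{(2\epsilon)^4}$ the middle term $\frac{N^w}{2^b}=\frac{1}{\epsilon'^2}$ is dominated by $\alpha(w)$; hence the cost in (\ref{equ:complexity}) is $O(\max\{\alpha(w),\beta(w)\})$. First I would round $w$ to the nearest even integer $w_e:=2\lfloor\tfrac{w+1}{2}\rfloor=w-r$, so that $|r|\leq 1$ holds by construction, and then \emph{recompute} $b$ from the equality (\ref{equ:b}) with $w_e$ in place of $w$, giving $b_e=w_e(\log N+2+2\log\epsilon)-2$. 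Keeping (\ref{equ:b}) as an equality guarantees that the testing condition (\ref{equ:numequ}) still holds for the integer pair $(w_e,b_e)$, and one checks that for every even $w_e\geq 2$ the hypothesis $N\geq\frac{4}{(2\epsilon)^4}$ still forces $N^{w_e/2}\geq\frac{1}{\epsilon'^2}$, so that (\ref{equ:complexity}) reduces to $O(\max\{\alpha(w_e),\beta(w_e)\})$ as before.

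The next step is to track how the rounding perturbs the two exponents. Writing each term as a power of $2$ and using $w\log N=2(n+2)T$ and $b=n-(n+2)T$ from (\ref{equ:w_T})--(\ref{equ:b_T}), a direct substitution yields
\begin{align*}
\log\alpha(w_e)&=\tfrac{w_e}{2}\log N=(n+2)T-\tfrac{r}{2}\log N,\\
\log\beta(w_e)&=(n-b_e)+\log L_e=(n+2)T+r\left(\log N+2+2\log\epsilon\right)+\log L_e,
\end{align*}
where $L_e:=\log\frac{1}{2^{2(w_e-1)}\epsilon^{2w_e}}$ is the Walsh-transform factor of Lemma \ref{lem:2}. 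The crucial observation is that both rounding contributions are absorbed into $|r|\log N$: for $\alpha$ one has $-\tfrac{r}{2}\log N\leq\tfrac{|r|}{2}\log N\leq|r|\log N$, while for $\beta$ the bias bound $\epsilon\leq\tfrac12$ forces $2+2\log\epsilon\leq 0$, so that $r(\log N+2+2\log\epsilon)\leq|r|\log N$ when $r\geq 0$ and this term is negative when $r<0$. Since $L_e\geq 1$, taking the maximum of the two terms gives the exponent bound $(n+2)T+|r|\log N+\log L_e$.

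It remains to show $\log L_e=O\!\left(\log\left((n+2)T+\log N\right)\right)$, and this I expect to be the main obstacle, because a priori $L_e$ carries the unbounded quantity $\log\frac1\epsilon$. Substituting the rounded parameters into the identity $L_e=w_e\log N-b_e$ collapses it to
\[
L_e=3(n+2)T-n+2r\left(1+\log\epsilon\right),
\]
so that, using $0\leq 3(n+2)T-n\leq 3(n+2)T$ (which follows from $T\in[\tfrac13,\tfrac12]$) and $|r|\leq 1$, the only delicate term is $2r(1+\log\epsilon)$. Here the standing hypothesis is exactly what is needed: taking logarithms of $N\geq\frac{4}{(2\epsilon)^4}$ gives $|\log\epsilon|\leq\frac{\log N+2}{4}$, whence $|2r(1+\log\epsilon)|=O(\log N)$ and therefore $L_e=O\!\left((n+2)T+\log N\right)$. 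One more logarithm yields the claimed estimate for $\log L_e$, and combining it with the exponent bound of the previous paragraph produces the time and space bound (\ref{equ:OO}). Finally, since the procedure recovers only the $n-b_e=(n+2)T$ leading key bits, I would recover all of $x$ by re-running the attack on the remaining coordinates; as $T\in[\tfrac13,\tfrac12]$ this needs only $\lceil n/((n+2)T)\rceil=O(1)$ rounds, each of cost (\ref{equ:OO}), so the overall complexity is unchanged up to the constant hidden in $O(\cdot)$.
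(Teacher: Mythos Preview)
Your argument is correct and structurally the same as the paper's: round $w$ to the nearest even integer $w_e=2\lfloor\tfrac{w+1}{2}\rfloor$, recompute $b$ from the equality (\ref{equ:b}), and track how the shift $r$ perturbs the two exponents $\tfrac{w_e}{2}\log N$ and $n-b_e$. Your sign convention $w_e=w-r$ agrees with the statement's definition of $r$ (the paper's own proof writes $w'=w+r$, a harmless sign slip since only $|r|$ enters the bound), and your observation that $2+2\log\epsilon\leq 0$ is exactly how the paper absorbs the $\beta$-perturbation into $|r|\log N$.

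The one place where you take a longer route than necessary is the bound on the logarithmic factor $L_e$. You compute $L_e=3(n+2)T-n+2r(1+\log\epsilon)$ explicitly and then invoke $|\log\epsilon|\leq\tfrac{\log N+2}{4}$ from the hypothesis $N\geq\tfrac{4}{(2\epsilon)^4}$. The paper avoids this detour: since with equality in (\ref{equ:b}) one has $L_e=\log\tfrac{N^{w_e}}{2^{b_e}}$, and the same hypothesis already gives $\tfrac{N^{w_e}}{2^{b_e}}\leq N^{w_e/2}$ for every even $w_e$, one gets immediately $L_e\leq\tfrac{w_e}{2}\log N=(n+2)T-\tfrac{r}{2}\log N\leq(n+2)T+\tfrac12\log N$. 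So what you flag as ``the main obstacle'' dissolves in one line. Your explicit computation is not wrong, and it does make transparent how $\epsilon$ disappears from the final bound, but it is not needed. Your final remark about iterating to recover all of $x$ is a useful addition that the paper leaves implicit.
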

\begin{proof}
Let $w$ be as in (\ref{equ:w}). Define
\begin{equation}
\label{equ:w'b'}
 w':=2 \left\lfloor \frac{w+1}{2} \right\rfloor \mbox{ and } b':=\left\lfloor w'\left(\log N +2+2 \log \epsilon \right)-2\right\rfloor.
\end{equation}
This definition ensures that $w'$ and $b'$ are integers and $w'=w+r$ is even with $r\in [-1,1]$.
Further
\[
 \frac{N^{w'}}{2^{b'}} \geq  \frac{N^{w}}{2^{b}} = \frac{1}{2^{2(w-1)}\epsilon^{2w}},
\]
so we have enough equations to check a hypothesis on the $n-b'$ nonzero bits.
The complexity for finding the $w'$-ary linear equations equals
\[
N^\frac{w'}{2}=N^{\frac{w}{2}+\frac{r}{2}} = 2^{(n+2)T +\frac{r}{2} \log N}.
\]
Let us now examine the complexity for evaluating these equations at $2^{n-b'}$ points.
We have
\begin{eqnarray*}
 b'&=& \left\lfloor w'\left(\log N +2+2 \log \epsilon \right)-2\right\rfloor 
\\ &=&\left\lfloor (w+r)\left(\log N +2+2 \log \epsilon \right)-2\right\rfloor
\\ &=&\left\lfloor w\left(\log N +2+2 \log \epsilon \right)-2 +r\left(\log N +2+2 \log \epsilon \right)\right\rfloor
\\ &\geq& w\left(\log N +2+2 \log \epsilon \right)-2 +r\left(\log N +2+2 \log \epsilon \right) -1/2
\\ &\stackrel{(\ref{equ:b_T})}{=}&  n-(n+2)T +r\left(\log N +2+2 \log \epsilon \right) -1/2.
\end{eqnarray*}
Hence 
\[
  2^{n-b'} \log \frac{N^{w'}}{2^{b'}}  
\leq 2^{(n+2)T-r(\log N +2 +2\log \epsilon )+1/2} \underbrace{\log N^\frac{w'}{2}}_{\leq (n+2)T +\frac{1}{2} \log N}.
\]
Adding these two upper bounds, we obtain the overall complexity
\begin{eqnarray}
  2^{n-b'} \log \frac{N^{w'}}{2^{b'}}+N^\frac{w'}{2} 
&\leq& 2^{(n+2)T}\left(2^{-r(\log N +2+2 \log \epsilon)+\frac{1}{2}+ \log ((n+2)T +\frac{r}{2} \log N)}+2^{\frac{r}{2}\log N}\right)  
\\&=&O\left(2^{(n+2)T+ r\log N + \log \left((n+2)T+\frac{1}{2} \log N\right) }\right)
\end{eqnarray}
\end{proof}
\begin{corollary}
\label{cor:OO}
Using the notation from the previous lemma. Making  $N \geq \frac{4}{(2\epsilon)^4}$ oracle calls
the $n$-dimensional LPN$_\epsilon$ problem can be solved in time and space
\begin{equation*}
O\left(2^{(n+2)T+ \log N + \log \left((n+2)T+ \log N\right) }\right).
\end{equation*}
\end{corollary}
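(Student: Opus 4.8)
The plan is to obtain the Corollary directly from Lemma~\ref{lem:OO} by a crude worst‑case bound on the parity correction $r$. Recall that $w' = 2\lfloor \frac{w+1}{2}\rfloor$ is an even integer within distance $1$ of $w$, so that $r = w - w'$ satisfies $r \in [-1,1]$ and in particular $|r| \le 1$. The Corollary is therefore nothing more than the statement of Lemma~\ref{lem:OO} with the term $|r|\log N$ in the exponent replaced by its largest possible value $\log N$.

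First I would note that the standing hypothesis $N \ge \frac{4}{(2\epsilon)^4}$ forces $N \ge 4$: since $\epsilon \in [0,\tfrac12]$ we have $(2\epsilon)^4 \le 1$, hence $\frac{4}{(2\epsilon)^4} \ge 4$. Consequently $\log N \ge 2 > 0$, so from $|r| \le 1$ we get $2^{|r|\log N} \le 2^{\log N} = N$. Moreover, since $t \mapsto \log\!\big((n+2)T + t\big)$ is increasing, we also have $\log\!\big((n+2)T + |r|\log N\big) \le \log\!\big((n+2)T + \log N\big)$.

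Substituting these two inequalities into the exponent appearing in the bound (\ref{equ:OO}) of Lemma~\ref{lem:OO}, namely $(n+2)T + |r|\log N + \log\!\big((n+2)T + \log N\big)$, shows that this exponent is at most $(n+2)T + \log N + \log\!\big((n+2)T + \log N\big)$, which is exactly the exponent claimed in the Corollary. The constant factors are already absorbed by the $O(\cdot)$ of Lemma~\ref{lem:OO}, so nothing further is needed.

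I do not expect any real obstacle: the Corollary is simply a convenient relaxation of Lemma~\ref{lem:OO}, trading the exact (possibly negative, possibly fractional) exponent correction $|r|\log N$ that arises from rounding $w$ up to the nearest even integer for its worst case $\log N$. The only point deserving a second glance is verifying that $\log N \ge 0$, so that the replacement $|r|\log N \le \log N$ goes in the right direction; this is precisely what the assumption $N \ge \frac{4}{(2\epsilon)^4}$ guarantees.
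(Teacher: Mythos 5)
Your proposal is correct and matches the paper's own (one-line) proof, which likewise deduces the corollary from Lemma~\ref{lem:OO} via $|r| = \left|w - 2\left\lfloor \frac{w+1}{2}\right\rfloor\right| \leq 1$. Your additional check that $\log N \geq 2 > 0$ (so that $|r|\log N \leq \log N$ goes in the right direction) is a sensible detail the paper leaves implicit.
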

\begin{proof}
Immediately as $|r|=\left| w-2 \left\lfloor \frac{w+1}{2} \right\rfloor \right| \leq 1$.
\end{proof}
It is not hard to see that if $\log N$ is large, $T_\epsilon(N)$ will converge to $\frac{1}{3}$. Clearly
\begin{equation*}
 T_\epsilon(N)= \frac{\log N}{3\log N+4+4\log \epsilon} =
 \frac{1}{3}\left(1-\frac{4+4 \log \epsilon}{3\log N + 4+4 \log \epsilon}\right).
\end{equation*}
Recall that $\log \epsilon<-1$ and since $N>\frac{4}{(2\epsilon)^4}$ we have that $\log N > -2 - 4\log\epsilon$.
Consequently
\[
  T_\epsilon(N) < \frac{1}{3}\left(1+\frac{4+4 \log \epsilon}{2+8\log \epsilon}\right) < \frac{1}{3}\left(1+\frac{1}{2}\right)=\frac{1}{2}.
\]
In the case where $N$ is significantly bigger, particularly if 
\[
 N \geq \frac{4}{(2\epsilon)^4} 2^{\frac{n}{\log n}}>2^{\frac{n}{\log n} -\frac{4}{3} -\frac{4}{3} \log \epsilon},
\]
one readily verifies that
\begin{equation}
\label{equ:T}
 T_\epsilon(N)\leq \frac{1}{3}\left(1-\frac{4}{3}\frac{(1+\log \epsilon)\log n}{n}\right).
\end{equation}
We can prove the following lemma:
\begin{lemma}
If $\epsilon = \frac{1}{\poly(n)}$ we can solve the $n$-dimensional LPN$_\epsilon$
in time and space
\begin{equation*}
2^{\frac{n}{3} + o(n)},
\end{equation*}
making  $N \geq 2^{\frac{n}{\log n}} \frac{4}{(2\epsilon)^4}$ oracle calls.
\end{lemma}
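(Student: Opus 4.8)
The plan is to instantiate Corollary~\ref{cor:OO} with the prescribed number of samples and check that each summand appearing in the exponent of its complexity bound is either $\frac{n}{3}+o(n)$ or $o(n)$. Since the hypothesis gives $N \geq 2^{\frac{n}{\log n}}\frac{4}{(2\epsilon)^4} \geq \frac{4}{(2\epsilon)^4}$, Corollary~\ref{cor:OO} applies and solves the $n$-dimensional LPN$_\epsilon$ problem in time and space $O\left(2^{(n+2)T+\log N+\log((n+2)T+\log N)}\right)$ with $T=T_\epsilon(N)$. It therefore suffices to bound the three terms $(n+2)T$, $\log N$ and $\log((n+2)T+\log N)$ individually.

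First I would bound $(n+2)T$. Because $N$ also satisfies $N \geq \frac{4}{(2\epsilon)^4}2^{\frac{n}{\log n}}$, inequality~(\ref{equ:T}) is available and gives
\[
(n+2)T \;\leq\; \frac{n+2}{3}\left(1-\frac{4}{3}\frac{(1+\log\epsilon)\log n}{n}\right) \;=\; \frac{n+2}{3}-\frac{4(n+2)(1+\log\epsilon)\log n}{9n}.
\]
The assumption $\epsilon=\frac{1}{\poly(n)}$ forces $\epsilon \geq n^{-c}$ for some constant $c$ and all large $n$, so that $-1>\log\epsilon\geq -c\log n$ and hence $|(1+\log\epsilon)\log n| = O(\log^2 n)$. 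The correction term is thus $O(\log^2 n)$, and $(n+2)T = \frac{n}{3}+O(\log^2 n) = \frac{n}{3}+o(n)$.

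Next I would treat the two remaining terms. Expanding the sample count, $\log N = \frac{n}{\log n}+\log\frac{4}{(2\epsilon)^4} = \frac{n}{\log n}+2-4\log\epsilon = \frac{n}{\log n}+O(\log n) = o(n)$, using $\frac{n}{\log n}=o(n)$ together with $|\log\epsilon|=O(\log n)$. Since $(n+2)T=O(n)$ and $\log N=o(n)$, we get $(n+2)T+\log N=O(n)$, whence $\log((n+2)T+\log N)=O(\log n)=o(n)$. Adding the three estimates, the exponent in Corollary~\ref{cor:OO} is $\frac{n}{3}+o(n)$, which is the asserted bound for both time and space, completing the proof.

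The calculation is essentially bookkeeping; the one place that deserves care is the sign in~(\ref{equ:T}). Since $\log\epsilon<-1$, the factor $1+\log\epsilon$ is negative, so $T$ is in fact slightly larger than $\frac{1}{3}$, and one must confirm that $(n+2)$ times this excess remains $o(n)$. This is exactly where the polynomial-bias hypothesis is used: it keeps $|\log\epsilon|$ at scale $O(\log n)$, hence the excess at scale $O(\log^2 n/n)$; were $\epsilon$ allowed to be, say, exponentially small in $n$, the term $(n+2)T$ would no longer be $\frac{n}{3}+o(n)$.
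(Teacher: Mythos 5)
Your proposal follows essentially the same route as the paper: invoke the complexity bound of Lemma~\ref{lem:OO}/Corollary~\ref{cor:OO}, control $(n+2)T$ via inequality~(\ref{equ:T}) using $|\log\epsilon|=O(\log n)$, and check that the remaining exponent terms are $o(n)$. Your handling of the sign of $1+\log\epsilon$ and the role of the polynomial-bias hypothesis is exactly right and matches the paper's computation of $(n+2)T'$.

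There is one step that, as written, would fail. You apply Corollary~\ref{cor:OO} with $T=T_\epsilon(N)$ for the \emph{given} $N$ and then assert $\log N=\frac{n}{\log n}+\log\frac{4}{(2\epsilon)^4}$. But the hypothesis only gives $N\geq 2^{\frac{n}{\log n}}\frac{4}{(2\epsilon)^4}$, and the bound of Corollary~\ref{cor:OO} is increasing in $\log N$ (it reflects the cost $N^{w'/2}$ of forming linear combinations, which grows with the number of samples actually used). If, say, $N=2^{2n}$, your claimed equality is false and the exponent term $\log N$ is not $o(n)$, so the conclusion does not follow. The paper avoids this by explicitly discarding equations: it sets $N':=2^{\frac{n}{\log n}}\frac{4}{(2\epsilon)^4}$, runs the algorithm on only $N'$ of the samples, and applies the corollary with $T'=T_\epsilon(N')$ and $\log N'$. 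The fix to your argument is one sentence, but without it the chain of equalities for $\log N$ is not justified by the stated hypothesis.
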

\begin{proof}
First notice that with $\epsilon = \frac{1}{\poly(n)}$, we have that $\log \frac{1}{\epsilon} =  o(n)$.
We will use only a subset of $N'= 2^{\frac{n}{\log n}} \frac{4}{(2\epsilon)^4}$ of the given equations.
Write $T':=T_\epsilon(N')$.
Then 
\[
 (n+2)T'= \frac{n+2}{3}\left(1-\frac{4\log n(\log \epsilon +1)}{3n}\right) = \frac{1}{3}n
 \underbrace{+\frac{2}{3}-\frac{4(n+2)}{n}(\log \epsilon +1)\log n}_{=o(n)}.
\]
Further
\[
 \log N' = \frac{n}{\log n} + \log \frac{4}{(2 \epsilon)^4} = o(n).
\]
So from Lemma \ref{lem:OO} and as also $\log \left((n+2)T'+ \log N'\right) = o(n)$, we get that we can find the solution in time
\[
O\left(2^{(n+2)T'+ \log N' + \log \left((n+2)T'+ \log N'\right) }\right)= 2^{\frac{n}{3}+o(n)}.
\]
\end{proof}

We have seen that we do not need more than  $N=2^{\frac{n}{\log n}+o(n)}$ equations to solve the LPN problem in essentially 
cube-root time. As seen in the proof of Lemma 4, given $N \gg 2^{\frac{n}{\log n}}$ the approach makes use of
$N'= 2^{\frac{n}{\log n}} \frac{4}{(2\epsilon)^4}$ of the given equations. 
The resulting overhead can be exploited to further reduce the complexity. The principle used
 in the case where $N \gg 2^{\frac{n}{\log n}}$ is called decimation \cite{fo06}. Given
\[
 N=2^l 2^{\frac{n}{\log n}} \frac{4}{(2\epsilon)^4}
\]
equations, the problem is basically reduced to solving the LPN problem in dimension $n-l$.

\subsection{Decimation}
\label{sec:evenmore}
We have seen that if $N = 2^{\frac{n}{\log n}}\frac{4}{(2\epsilon)^4}$ the complexity of the LPN problem is  $2^{\frac{n}{3}+o(n)}$.
If we are given $N \gg 2^{\frac{n}{\log n}}\frac{4}{(2\epsilon)^4}$ equations, simple decimation allows to reduce the 
security parameter $n$ of the problem.
Suppose we are given 
\[
N= 2^{\frac{n}{\log n}+l}\frac{4}{(2\epsilon)^4},
\]
 equations with $l<n-2$.
We want to consider only the equations that do not depent on (e.g. the first) $l' \in \N$ bits of the key $x$.
We have an expected number $N2^{-l'}$ of such equations.
In order to be able to recover the remaining $n-l'$ keybits, the following equality must hold
\[
 N 2^{-l'}=2^{\frac{n}{\log n}+l-l'}\frac{4}{(2\epsilon)^4} \geq 2^\frac{n-l'}{\log (n-l')}\frac{4}{(2\epsilon)^4}.
\]
Equivalently, 
\[
 \frac{n}{\log n}+l-l' \geq \frac{n-l'}{\log (n-l')}.
\]
Setting $l'=\lfloor l \rfloor \leq n-3$, this inequality is satisfied
 and we can reduce the problem parameter $n$ to $n-l$.
\begin{lemma}
If $\epsilon = \frac{1}{\poly(n)}$ we can solve the LPN$_\epsilon$
in time and space
\[
 2^{\frac{n-l}{3} + o(n)}.
\]
making $N \geq 2^{\frac{n}{\log n}+l} \frac{4}{(2\epsilon)^4}$ oracle calls.
\end{lemma}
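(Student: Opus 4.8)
The plan is to reuse the decimation reduction developed in the paragraphs just before the statement. Starting from the $N\ge 2^{\frac{n}{\log n}+l}\frac{4}{(2\epsilon)^4}$ samples in dimension $n$, I would fix a block of $l':=\lfloor l\rfloor$ coordinates and retain only those samples $(\langle g,x\rangle+e,g)$ whose query vector $g$ vanishes on this block. Every retained sample is then an honest output of \O{}$_\epsilon$ restricted to the complementary $m:=n-l'$ coordinates, so the sub-collection is a genuine LPN$_\epsilon$ instance on the last $m$ key bits. The reduction thus transports the dimension-$n$ problem to a dimension-$m$ problem, which I would finish off by applying the cube-root lemma proved above as a black box (it returns the full $m$-bit sub-key).

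Carrying this out, the steps in order are: first, observe that in expectation $N2^{-l'}$ samples survive the decimation; second, insert the inequality $\frac{n}{\log n}+l-l'\ge\frac{n-l'}{\log(n-l')}$, already checked for $l'=\lfloor l\rfloor\le n-3$, to conclude $N2^{-l'}\ge 2^{\frac{m}{\log m}}\frac{4}{(2\epsilon)^4}$, exactly the sample threshold the cube-root lemma demands in dimension $m$; third, note that $\epsilon=\frac{1}{\poly(n)}$ remains $\frac{1}{\poly(m)}$ since $m\le n$, so that lemma applies unchanged and, through the complexity bound of Lemma~\ref{lem:OO}, recovers the $m$ non-decimated bits in time and space $2^{\frac{m}{3}+o(m)}=2^{\frac{n-l'}{3}+o(n)}$; fourth, because $l-1<l'\le l$ the exponent $\frac{n-l'}{3}$ differs from $\frac{n-l}{3}$ by less than $\frac13$, an $O(1)$ gap that is swallowed by $o(n)$, giving the advertised $2^{\frac{n-l}{3}+o(n)}$.

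The step I expect to cost the most thought is recovering the \emph{whole} key within this budget, since the main decimation pins down only the $m$ non-decimated bits and leaves the $l'$ decimated ones open. Here I would substitute the now-known $m$ bits into all $N$ samples, turning them into a dimension-$l'$ LPN$_\epsilon$ instance, and then read its bits off one at a time. To recover a single coordinate $x_j$ of the remaining block, decimate the other $l'-1$ coordinates as well, keeping the $\approx N2^{-l'}$ samples that isolate $x_j$; since $\log N-l'=\frac{n}{\log n}+(l-l')+\log\frac{4}{(2\epsilon)^4}\ge\frac{n}{\log n}$, these number at least $2^{\frac{n}{\log n}}=2^{o(n)}$, comfortably above $\frac{1}{\epsilon^2}=\poly(n)$, so a single majority vote (a Walsh transform in one variable) returns $x_j$ in $2^{o(n)}$ time. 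Repeating over the $l'=O(n)$ leftover coordinates costs $2^{o(n)}$ in total, which vanishes into the error term. What remains is pure bookkeeping: the expected-count and pairwise-independence heuristics flagged in Section~\ref{sec:alg} are invoked exactly as before, and no clean-up stage touches the dominant exponent, so the overall time and space stay at $2^{\frac{n-l}{3}+o(n)}$.
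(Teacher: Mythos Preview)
Your proof follows the same decimation-then-cube-root approach as the paper; indeed the paper gives no separate proof beyond the discussion immediately preceding the lemma, which you reproduce and then extend by also spelling out how to recover the $l'$ decimated bits (something the paper leaves implicit). One small slip worth fixing: the implication ``$\epsilon=\frac{1}{\poly(n)}$ remains $\frac{1}{\poly(m)}$ since $m\le n$'' is not valid in general (take $m=\log n$, say), so instead of invoking the previous lemma as a black box in dimension $m$ you should appeal directly to Lemma~\ref{lem:OO} and observe that its overhead terms ($\log N'$, $(m+2)T'-\tfrac{m}{3}$, etc.) are all $o(n)$ --- which is exactly what the stated bound $2^{\frac{n-l}{3}+o(n)}$ requires, and no stronger.
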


\section{Example}
\label{sec:conclusion}

We have seen that the LPN problem can be solved in essentially cube-root time and space.
Consider the classical setting of a fast correlation attack \cite{me88}.
Suppose we have a stream cipher with keylength $n=128$ whose output bits correspond to linear 
combinations of the keybits transmitted over the binary symmetric channel with crossover 
probability $\frac{1}{2}-\epsilon=\frac{1}{2}-\frac{1}{8}=0.375$. 
For a given number $N\geq 2^{10}$ (note that $\log \frac{4}{(2\epsilon)^4}=10$) of equations, 
we have seen how to in principle optimally choose
$w$ and $b$ (see  (\ref{equ:w}) and (\ref{equ:bb})).
However these values are not necessarily in $\N$ and $w$ is not necessarily even.
So $w$ is rounded to the nearest even number $w'$ and $b'$ is chosen accordingly (see  (\ref{equ:w'b'}) in the proof of 
Lemma \ref{lem:OO}). This gives an additional summand $< |r| \log N$ in the exponent of the complexity (compare (\ref{equ:OO})).
Table \ref{table:overview} shows how this rounding problem influences the complexity.
 Decimation is not considered in
this example.

\begin{table}[h]
\renewcommand{\arraystretch}{1.3}
\begin{center}
\begin{tabular}{  l l l  l l l l  l } \toprule
$\log N$ ~\	& $w$  ~\	& $b$  ~\ 	& $w'$ ~\ &  $b'$ ~\ 	& $|r| \log N$ ~\ & $\log C_{LC}$ ~\ & $\log C_{HT}$ \\  \midrule
10	  	&  11.82	&  68.91	& 12	&  70 		& 	1.8	&	60	&  63.64    \\  
20	  	&  5		&  78	 	& 6	&  94 		& 	20	&	60	&  38.70    \\  
30	  	&  3.17		&  80.44	& 4	&  102 		& 	24.8	&	60	&  30.17    \\  
40	  	&  2.32		&  81.57	& 2	&  70 		& 	12.8	&	40	&  61.32   \\  
47	  	&  1.95		&  82.06	& 2	&  84 		& 	2.1	&	47	&  47.32   \\ 
50  		&  1.83		&  82.22	& 2	&  90		& 	8.5	&	50	&  41.32  \\ \toprule
\end{tabular}
\end{center}
\label{table:overview}
\caption{Complexity of Linear Combination $C_{LC}$ and Hypothesis testing $C_{HT}$ depending on the number of equations.}
\end{table}

\section*{Acknowledgment}

The author would like to thank G. Maze for many useful comments and discussions.

\bibliographystyle{plain}


\end{document}